\newtheorem{theorem}{Theorem}
\newtheorem{lemma}{Lemma}
\newtheorem{corollary}{Corollary}[theorem]
\newtheorem{definition}{Definition}
\newcommand{\norm}[1]{\left\lVert#1\right\rVert}
\DeclareMathOperator{\tr}{tr}
\def\BibTeX{{\rm B\kern-.05em{\sc i\kern-.025em b}\kern-.08em
    T\kern-.1667em\lower.7ex\hbox{E}\kern-.125emX}}
\begin{document}

\title{Quality Modeling Under A Relaxed Natural Scene Statistics Model}

\author{\IEEEauthorblockN{Abhinau K. Venkataramanan}
\IEEEauthorblockA{\textit{Department of Electrical and Computer Engineering} \\
\textit{The University of Texas at Austin}\\
Austin, USA \\
abhinaukumar@utexas.edu}
\and
\IEEEauthorblockN{Alan C. Bovik}
\IEEEauthorblockA{\textit{Department of Electrical and Computer Engineering} \\
\textit{The University of Texas at Austin}\\
Austin, USA \\
bovik@ece.utexas.edu}
}

\maketitle

\begin{abstract}
Information-theoretic image quality assessment (IQA) models such as Visual Information Fidelity (VIF) and Spatio-temporal Reduced Reference Entropic Differences (ST-RRED) have enjoyed great success by seamlessly integrating natural scene statistics (NSS) with information theory. The Gaussian Scale Mixture (GSM) model that governs the wavelet subband coefficients of natural images forms the foundation for these algorithms. However, the explosion of user-generated content on social media, which is typically distorted by one or more of many possible unknown impairments, has revealed the limitations of NSS-based IQA models that rely on the simple GSM model. Here, we seek to elaborate the VIF index by deriving useful properties of the Multivariate Generalized Gaussian Distribution (MGGD), and using them to study the behavior of VIF under a Generalized GSM (GGSM) model.
\end{abstract}

\begin{IEEEkeywords}
Visual Information Fidelity, Generalized Gaussian Scale Mixture, Differential Entropy, Kurtosis.
\end{IEEEkeywords}
\section{Introduction}
\label{sec:intro}
The field of Natural Scene Statistics (NSS) modeling of images originates from the seminal work of Ruderman \cite{ref:ruderman_nss}, later formalized in the form of the Gaussian Scale Mixture (GSM) model that governs the subband coefficients of natural images \cite{ref:gsm_nss}. The GSM model forms the backbone of successful Full-Reference (FR) Image Quality Assessment (IQA) models such as Information Fidelity Criterion (IFC) \cite{ref:ifc}, Visual Information Fidelity (VIF) \cite{ref:vif}, and VMAF \cite{ref:vmaf}, as well as Reduced Reference (RR) Video Quality Assessment (VQA) models such as ST-RRED \cite{ref:strred} and SpEED-QA \cite{ref:speedqa}. The GSM model also underlies the perceptually significant divisive normalization processess used in popular No-Reference (NR) IQA models like DIIVINE \cite{ref:diivine}, BRISQUE \cite{ref:brisque}, and NIQE \cite{ref:niqe}.

However, images may undergo distortions in multiple stages (e.g. acquisition and compression) or may coincidentally occur, commingling to form aggregate distortions that are hard to model or even characterize. Distortions such as these often arise in User Generated Content (UGC) \cite{ref:live_challenge} \cite{ref:ugc_video_db}, where combined degradations are hard to capture using the GSM model. A comprehensive evaluation of NR IQA models on UGC can be found in \cite{ref:ugc_benchmark}.

Here, we provide a theoretical analysis of VIF under a Generalized GSM (GGSM) model of NSS \cite{ref:ggsm}. We achieve this by deriving novel results on the Multivariate Generalized Gaussian Distribution (MGGD). Further, we also derive results on the multivariate kurtosis, which we use to approximate the true value of VIF, assuming the images being analyzed follow a GGSM model.

\section{Visual Information Fidelity}
\label{sec:vif}
We begin by reviewing VIF, which is an information-theoretic FR IQA model. Let the coefficients obtained after a bandpass transform on the reference image, such as a steerable pyramid decomposition \cite{ref:steer}, be \(\mathbf{C_i} \in \mathbb{R}^M\), \(i = 1 \dots N\). Assuming a signal attenuation and additive noise distortion model, where \(\mathbf{D_i} = g_i \mathbf{C_i} + \mathbf{V_i}\)

are distorted bandpass coefficients, \(g_i\) are deterministic scalars, and \(\mathbf{V_i} \sim \mathcal{N}(0, \sigma_v^2\mathbf{I})\).

VIF assumes that the \(\mathbf{C_i}\) are distributed as a Gaussian Scale Mixture (GSM). That is, each \(\mathbf{C_i}\) can be expressed as \(\mathbf{C_i} = Z_i \mathbf{U_i}\), where \(Z_i\) are non-negative scalar random variables (RVs) and \(\mathbf{U_i} \sim \mathcal{N}(0, \mathbf{C_U})\), \(\mathbf{C_U} \in \mathbb{R}^{M\times M}\), independent of \(Z_i\).

In VIF, uncertainty of perception, such as neural noise is modelled as additive white Gaussian noise (AWGN), leading to observed reference and distorted bandpass coefficients \(\mathbf{E_i}\) and \(\mathbf{F_i}\), given by
\begin{equation}
    \label{eq:reference_observed}
    \mathbf{E_i} = Z_i \mathbf{U_i} + \mathbf{N_i},
\end{equation}
\begin{equation}
    \label{eq:distorted_observed}
    \mathbf{F_i} = g_i Z_i \mathbf{U_i} + \mathbf{N'_i},
\end{equation}
where \(\mathbf{N_i}, \mathbf{N'_i} \sim \mathcal{N}(0, \sigma_n^2\mathbf{I})\) are independent and identically distributed (iid). Then, assuming a \(K\)-subband decomposition, the VIF index is defined as
\begin{equation}
    \label{eq:vif_expr}
    VIF = \frac{\sum\limits_{k = 1}^{K} \sum\limits_{i = 1}^{N} I(\mathbf{C_i^k}; \mathbf{F_i^k} | Z_i^k = z_i^k)}{\sum\limits_{k = 1}^{K} \sum\limits_{i = 1}^{N} I(\mathbf{C_i^k}; \mathbf{E_i^k} | Z_i^k = z_i^k)}.
\end{equation}

\section{Multivariate Generalized Gaussian Distribution}
\label{sec:mggd}
The probability density function (pdf) \(f_\mathbf{U}\) of an \(M\)-dimensional MGGD RV \(\mathbf{U} \sim \mathcal{MGGD}(0, \alpha, \mathbf{C_U})\) is given by
\begin{equation}
    f_U(\mathbf{u}) = \frac{\alpha \Gamma(\frac{M}{2}) \det(\mathbf{C_U})^{-\frac{1}{2}}}{\pi^{\frac{M}{2}} \Gamma(\frac{M}{2\alpha}) 2^{\frac{M}{2\alpha}}} \exp\left(-\frac{1}{2} \left( \mathbf{u}^T \mathbf{C_U}^{-1} \mathbf{u}\right)^\alpha\right).
\end{equation}

If \(\mathbf{U_i} \sim \mathcal{MGGD}(0, \alpha, \mathbf{C_U})\) in \eqref{eq:reference_observed} - \eqref{eq:distorted_observed}, where \(\alpha > 0\), the resulting model is called a Generalized GSM (GGSM). When \(\alpha=1\), the GGSM and GSM models are identical.

The first result we derive in this section is an expression for the differential entropy of an MGGD, which is hitherto known only for the scalar case (\(M=1\)).
\begin{lemma}
\label{lem:mggd_entr}
The differential entropy \(h(\mathbf{U})\) of a RV \(\mathbf{U} \sim \mathcal{MGGD}(0, \alpha, \mathbf{C_U})\) is given by
\begin{equation}
    \label{eq:ggd_entr}
    h(\mathbf{U}) = \frac{M}{2\alpha} - \log \frac{\alpha \Gamma(\frac{M}{2})}{\pi^{\frac{M}{2}} \Gamma(\frac{M}{2\alpha}) 2^{\frac{M}{2\alpha}}} + \frac{1}{2}\log \det(\mathbf{C_U}).
\end{equation}
\end{lemma}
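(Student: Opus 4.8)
The plan is to compute $h(\mathbf{U})$ straight from the definition $h(\mathbf{U}) = -\mathbb{E}[\log f_U(\mathbf{U})]$ and reduce the only nontrivial term to a moment of a Gamma random variable. Write the density as $f_U(\mathbf{u}) = c\,\exp\!\big(-\tfrac12 (\mathbf{u}^T \mathbf{C_U}^{-1}\mathbf{u})^\alpha\big)$ with normalizing constant $c = \alpha\Gamma(\tfrac{M}{2})\det(\mathbf{C_U})^{-1/2}\big/\big(\pi^{M/2}\Gamma(\tfrac{M}{2\alpha})2^{M/(2\alpha)}\big)$. Taking logarithms, $\log f_U(\mathbf{U}) = \log c - \tfrac12(\mathbf{U}^T\mathbf{C_U}^{-1}\mathbf{U})^\alpha$, so
\[
h(\mathbf{U}) = -\log c + \tfrac12\,\mathbb{E}\big[(\mathbf{U}^T\mathbf{C_U}^{-1}\mathbf{U})^\alpha\big].
\]
The term $-\log c$ already supplies $-\log\big(\alpha\Gamma(\tfrac M2)/(\pi^{M/2}\Gamma(\tfrac{M}{2\alpha})2^{M/(2\alpha)})\big) + \tfrac12\log\det(\mathbf{C_U})$, i.e.\ the last two terms of \eqref{eq:ggd_entr}. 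Hence it suffices to show $\mathbb{E}[(\mathbf{U}^T\mathbf{C_U}^{-1}\mathbf{U})^\alpha] = M/\alpha$.

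For that, first whiten: put $\mathbf{Y} = \mathbf{C_U}^{-1/2}\mathbf{U}$, so the quadratic form becomes $\|\mathbf{Y}\|^2$ and the Jacobian $\det(\mathbf{C_U})^{1/2}$ cancels the $\det(\mathbf{C_U})^{-1/2}$ factor, leaving $\mathbf{Y}$ with the spherically symmetric density proportional to $\exp(-\tfrac12\|\mathbf{y}\|^{2\alpha})$. Passing to polar coordinates, the radius $R = \|\mathbf{Y}\|$ has density $f_R(r)\propto r^{M-1}\exp(-\tfrac12 r^{2\alpha})$ on $(0,\infty)$, and the substitution $w = \tfrac12 r^{2\alpha}$ identifies $W := \tfrac12 R^{2\alpha} = \tfrac12(\mathbf{U}^T\mathbf{C_U}^{-1}\mathbf{U})^\alpha$ as a $\mathrm{Gamma}\big(\tfrac{M}{2\alpha},\,1\big)$ variable (this simultaneously re-derives the constant $c$). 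Therefore $\mathbb{E}[(\mathbf{U}^T\mathbf{C_U}^{-1}\mathbf{U})^\alpha] = 2\,\mathbb{E}[W] = 2\cdot\tfrac{M}{2\alpha} = \tfrac{M}{\alpha}$, and substituting back yields $h(\mathbf{U}) = \tfrac{M}{2\alpha} - \log\big(\alpha\Gamma(\tfrac M2)/(\pi^{M/2}\Gamma(\tfrac{M}{2\alpha})2^{M/(2\alpha)})\big) + \tfrac12\log\det(\mathbf{C_U})$, as claimed.

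I do not expect a genuine obstacle; this is a controlled computation, and the main thing to watch is the change of variables — tracking the exponents through the polar Jacobian $r^{M-1}$ and the substitution $w = \tfrac12 r^{2\alpha}$ so that $\int_0^\infty w^{M/(2\alpha)}e^{-w}\,dw = \Gamma(\tfrac{M}{2\alpha}+1)$ appears with exactly the powers of $2$ needed to cancel $2^{M/(2\alpha)}$ in $c$. An alternative that sidesteps the explicit integral is to invoke the known stochastic (elliptically symmetric) representation of the MGGD, from which the $\mathrm{Gamma}$ law of $\tfrac12(\mathbf{U}^T\mathbf{C_U}^{-1}\mathbf{U})^\alpha$ is immediate; either way the final step is just the mean of a Gamma variable.
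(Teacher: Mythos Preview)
Your proof is correct and follows essentially the same route the paper indicates (it omits the proof, pointing to the techniques of Lemma~\ref{lem:mggd_fim}): whiten via $\mathbf{Y}=\mathbf{C_U}^{-1/2}\mathbf{U}$ and evaluate the remaining radial integral in $M$-spherical coordinates. Your identification of $\tfrac12(\mathbf{U}^T\mathbf{C_U}^{-1}\mathbf{U})^\alpha$ as a $\mathrm{Gamma}(M/(2\alpha),1)$ variable is a clean way to package that integral and read off the mean.
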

We will omit the proof of Lemma \ref{lem:mggd_entr}, since the proof of Lemma \ref{lem:mggd_fim} contains all the techniques used here.

The second result we derive is the expression for the Fisher information Matrix (FIM), under translation, of an MGGD RV. 
\begin{definition}
\label{def:fisher_inf}
If \(\mathbf{X}\) has a probability density function \(f_X\), its Fisher information matrix is defined as 
\[ \mathbf{J(X)} = E \left[ \left(\frac{\nabla f_X(\mathbf{X})}{f_X(\mathbf{X})}\right)  \left(\frac{\nabla f_X(\mathbf{X})}{f_X(\mathbf{X})}\right)^T \right], \]
where \(\nabla\) denotes the gradient operator.
\end{definition}
\begin{lemma}
\label{lem:mggd_fim}
The FIM \(\mathbf{J(U)}\) of an MGGD is finite iff \(\alpha > \frac{1}{2} - \frac{M}{4}\), and is given by
\begin{equation}
    \label{eq:ggd_fisher}
    \mathbf{J(U)} = \frac{2^{2 - \frac{1}{\alpha}}\alpha^2}{M \Gamma(\frac{M}{2\alpha})}\Gamma\left(2 + \frac{M - 2}{2\alpha}\right) \mathbf{C_U}^{-1}.
\end{equation}
\end{lemma}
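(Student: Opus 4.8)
The plan is to reduce the matrix expectation in Definition~\ref{def:fisher_inf} to a single radial moment of a whitened MGGD. First I would compute the score. Writing $q(\mathbf{u}) = \mathbf{u}^T\mathbf{C_U}^{-1}\mathbf{u}$, the normalizing constant of $f_U$ is killed by the logarithmic derivative, and since $\nabla q(\mathbf{u}) = 2\mathbf{C_U}^{-1}\mathbf{u}$ the chain rule gives
\[ \frac{\nabla f_U(\mathbf{u})}{f_U(\mathbf{u})} = \nabla\!\left(-\tfrac{1}{2}q(\mathbf{u})^{\alpha}\right) = -\alpha\, q(\mathbf{u})^{\alpha-1}\,\mathbf{C_U}^{-1}\mathbf{u}. \]
Substituting this into the definition of the FIM yields
\[ \mathbf{J(U)} = \alpha^{2}\, E\!\left[\, q(\mathbf{U})^{2\alpha-2}\,\mathbf{C_U}^{-1}\mathbf{U}\mathbf{U}^{T}\mathbf{C_U}^{-1}\,\right], \]
so the FIM is finite exactly when this expectation is, a point I would revisit at the end.

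Next I would whiten: set $\mathbf{U} = \mathbf{C_U}^{1/2}\mathbf{Y}$, so that $\mathbf{Y}\sim\mathcal{MGGD}(0,\alpha,\mathbf{I})$ has a density depending on $\mathbf{y}$ only through $\|\mathbf{y}\|$. Then $q(\mathbf{U}) = \|\mathbf{Y}\|^{2}$ and $\mathbf{C_U}^{-1}\mathbf{U} = \mathbf{C_U}^{-1/2}\mathbf{Y}$, hence
\[ \mathbf{J(U)} = \alpha^{2}\,\mathbf{C_U}^{-1/2}\, E\!\left[\|\mathbf{Y}\|^{4\alpha-4}\,\mathbf{Y}\mathbf{Y}^{T}\right]\mathbf{C_U}^{-1/2}. \]
Because the law of $\mathbf{Y}$ is invariant under every orthogonal transformation, the matrix $A = E[\|\mathbf{Y}\|^{4\alpha-4}\mathbf{Y}\mathbf{Y}^{T}]$ satisfies $QAQ^{T} = A$ for all orthogonal $Q$, and therefore $A = c\,\mathbf{I}$ for a scalar $c$; taking traces gives $cM = E[\|\mathbf{Y}\|^{4\alpha-2}]$. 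This collapses the problem to $\mathbf{J(U)} = \tfrac{\alpha^{2}}{M}\,E[\|\mathbf{Y}\|^{4\alpha-2}]\,\mathbf{C_U}^{-1}$.

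Finally I would evaluate the radial moment. Passing to spherical coordinates, the MGGD normalizing constant and the surface area $2\pi^{M/2}/\Gamma(M/2)$ of the unit sphere combine so that, for any exponent $p$, $E[\|\mathbf{Y}\|^{p}]$ is a multiple of $\int_{0}^{\infty}\rho^{p+M-1}e^{-\rho^{2\alpha}/2}\,d\rho$; the substitution $t = \tfrac{1}{2}\rho^{2\alpha}$ turns this into a Gamma integral and gives
\[ E[\|\mathbf{Y}\|^{p}] = \frac{2^{p/(2\alpha)}\,\Gamma\!\left(\tfrac{p+M}{2\alpha}\right)}{\Gamma\!\left(\tfrac{M}{2\alpha}\right)}. \]
Setting $p = 4\alpha-2$, using $\tfrac{4\alpha-2}{2\alpha} = 2-\tfrac{1}{\alpha}$ and $\tfrac{4\alpha-2+M}{2\alpha} = 2+\tfrac{M-2}{2\alpha}$, and inserting into the displayed expression for $\mathbf{J(U)}$ reproduces \eqref{eq:ggd_fisher} exactly. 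The integral converges at the origin iff $p+M-1 > -1$, i.e. $4\alpha-2 > -M$, which is precisely $\alpha > \tfrac{1}{2} - \tfrac{M}{4}$ (equivalently, the argument $\tfrac{4\alpha-2+M}{2\alpha}$ of the Gamma function is positive); convergence at infinity is automatic since $\alpha > 0$ forces super-exponential decay. The main obstacle I anticipate is purely bookkeeping: carrying the exponent $4\alpha-4$ (not $2\alpha-2$) correctly through the whitening step, and tracking the $2^{M/(2\alpha)}$ and $\Gamma(M/(2\alpha))$ factors so that the cancellations in $E[\|\mathbf{Y}\|^{p}]$ come out clean; the structural steps — score, whitening, orthogonal symmetry, radial integral — are otherwise routine.
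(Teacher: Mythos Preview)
Your proposal is correct and follows essentially the same route as the paper: whiten by $\mathbf{C_U}^{-1/2}$, use the spherical symmetry of the resulting density to reduce the FIM to a scalar times $\mathbf{I}$, and evaluate the remaining radial moment in $M$-spherical coordinates to obtain \eqref{eq:ggd_fisher} together with the finiteness condition on $\alpha$. The only cosmetic difference is that the paper invokes the transformation rule $\mathbf{J}(\mathbf{AX}) = (\mathbf{A}^{-1})^{T}\mathbf{J(X)}\mathbf{A}^{-1}$ and kills off-diagonal entries by an odd-function argument, whereas you whiten inside the expectation and use orthogonal invariance to get $A=c\mathbf{I}$ in one stroke; the substance is identical.
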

This quantity is easy to derive for an MGGD when \(M=1\), but no such an expression for the multivariate case is available in the literature. This result will be used to apply Lemma \ref{lem:diff_entr_ineq} in Section \ref{sec:bounds}.
\begin{proof}
Recall that under an invertible linear transformation \(\mathbf{Y} = \mathbf{A X}\), we know that \(\mathbf{Y}\) has the pdf
\begin{equation}
    \label{eq:pdf_identity}
    f_Y(\mathbf{y}) = \frac{1}{\det{\mathbf{A}}} f_X(\mathbf{A}^{-1}\mathbf{y}).
\end{equation}
Differentiating with respect to \(\mathbf{y}\), and applying the chain rule for gradients, it follows that the FIM of \(\mathbf{Y}\) is given by
\begin{equation}
    \label{eq:fisher_inf_identity}
    \mathbf{J(Y)} = (\mathbf{A}^{-1})^T \mathbf{J(X)} \mathbf{A}^{-1}.
\end{equation}
Now, consider transformation \( \mathbf{Y} = \mathbf{C_U}^{-\frac{1}{2}}\mathbf{U} \). Then, using \eqref{eq:pdf_identity}, the \((p,q)\) th term of \(\mathbf{J(Y)}\) may be expressed
\begin{equation}
    \label{eq:fim_integral}
    [\mathbf{J(Y)}]_{pq} = \alpha^2 \int\limits_{\mathbb{R}^d} y_p y_q \norm{\mathbf{y}}_2^{4(\alpha-1)} f_Y(\mathbf{y}) d\mathbf{y}.
\end{equation}
But, when \(p \neq q\), \eqref{eq:fim_integral} is the integral of an odd function, hence,
\begin{equation}
    [\mathbf{J(Y)}]_{pq} = 0, \quad \text{if } p \neq q.
\end{equation}
When \(p = q\), the integral may be evaluated using \(M\)-spherical coordinates, which we omit the details of for brevity. Then, using \eqref{eq:fisher_inf_identity} yields the result in \eqref{eq:ggd_fisher}.

Note that the condition on \(\alpha\) is necessary and sufficient for the Gamma function in the numerator of \eqref{eq:ggd_fisher} to be defined.
\end{proof}

\section{Bounds on VIF under the GGSM Model}
\label{sec:bounds}
\subsection{Bounds on Differential Entropy}
Next, we derive bounds on VIF under the GGSM model. For simplicity, consider one subband of the bandpass decomposition, yielding random fields \(\mathbf{C}\), \(\mathbf{D}\), \(\mathbf{E}\), and \(\mathbf{F}\) defined as in Section \ref{sec:vif}. 

To derive bounds on VIF, we will use bounds on the differential entropies of sums of RVs. Let \(\mathbf{X}\) and \(\mathbf{Y}\) be two independent RVs. Then, using Shannon's Entropy Power Inequality (EPI) \cite{ref:shannon_epi}, we obtain.
\begin{equation}
    h(\mathbf{X} + \mathbf{Y}) \geq \frac{M}{2}\log\left(\exp\left(\frac{2}{M}h(\mathbf{X})\right) + \exp\left(\frac{2}{M}h(\mathbf{Y})\right) \right).
\end{equation}

To obtain an upper bound, we use the following result \cite{ref:diff_entr_bound}.
\begin{lemma}
\label{lem:diff_entr_ineq}
Let \(\mathcal{F} = \{ \mathbf{X} \in \mathbb{R}^M | E[\log(1 + \norm{\mathbf{X}}_2)] < \infty \}\), i.e., the set of RVs having finite logarithmic moments. If \(\mathbf{X} \in \mathcal{F}\) has finite \(h(\mathbf{X})\) and \(\mathbf{J(X)}\), and \(\mathbf{Z} \sim \mathcal{N}(\mu, \sigma^2 \mathbf{I})\) is independent of \(\mathbf{X}\), then
\[ h(\mathbf{X} + \mathbf{Z}) \leq h(\mathbf{X}) + \frac{M}{2}\log \left(1 + \frac{\sigma^2}{M} \tr\mathbf{J(X)}\right).\]
\end{lemma}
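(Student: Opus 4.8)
The plan is to interpolate between $\mathbf{X}$ and $\mathbf{X}+\mathbf{Z}$ along the Gaussian heat flow, and to combine de Bruijn's identity with the matrix form of Stam's Fisher-information inequality. Since differential entropy is invariant under translation, I would first reduce to $\mu = 0$, so that $\mathbf{Z} \overset{d}{=} \sigma\,\mathbf{Z}_0$ with $\mathbf{Z}_0 \sim \mathcal{N}(0,\mathbf{I}_M)$ independent of $\mathbf{X}$. Define the perturbed field $\mathbf{X}_t = \mathbf{X} + \sqrt{t}\,\mathbf{Z}_0$ for $t \in [0,\sigma^2]$, so that $\mathbf{X}_0 = \mathbf{X}$ and $\mathbf{X}_{\sigma^2} \overset{d}{=} \mathbf{X}+\mathbf{Z}$. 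Under the hypotheses $\mathbf{X} \in \mathcal{F}$ with $h(\mathbf{X})$ and $\mathbf{J(X)}$ finite, the map $t \mapsto h(\mathbf{X}_t)$ is absolutely continuous on $[0,\sigma^2]$ and de Bruijn's identity applies:
\[ \frac{d}{dt}\, h(\mathbf{X}_t) = \frac{1}{2}\tr \mathbf{J}(\mathbf{X}_t). \]

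Next I would control $\mathbf{J}(\mathbf{X}_t)$ using the multivariate Fisher information inequality: for independent vectors with nonsingular Fisher matrices, $\mathbf{J}(\mathbf{X}+\mathbf{Y})^{-1} \succeq \mathbf{J(X)}^{-1} + \mathbf{J(Y)}^{-1}$ in the positive-semidefinite order. Taking $\mathbf{Y} = \sqrt{t}\,\mathbf{Z}_0$, for which $\mathbf{J(Y)} = t^{-1}\mathbf{I}_M$, gives $\mathbf{J}(\mathbf{X}_t)^{-1} \succeq \mathbf{J(X)}^{-1} + t\,\mathbf{I}_M$, hence $\mathbf{J}(\mathbf{X}_t) \preceq \big(\mathbf{J(X)}^{-1} + t\,\mathbf{I}_M\big)^{-1}$. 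Writing $\lambda_1,\dots,\lambda_M$ for the eigenvalues of $\mathbf{J(X)}$ and taking traces, $\tr \mathbf{J}(\mathbf{X}_t) \leq \sum_{j=1}^{M} \lambda_j / (1 + t\lambda_j)$. Integrating the de Bruijn identity over $t \in [0,\sigma^2]$ and substituting this bound term by term yields
\[ h(\mathbf{X}+\mathbf{Z}) - h(\mathbf{X}) = \int_0^{\sigma^2} \tfrac{1}{2}\tr \mathbf{J}(\mathbf{X}_t)\, dt \;\leq\; \frac{1}{2}\sum_{j=1}^{M} \log\!\big(1 + \sigma^2 \lambda_j\big). \]
Finally, concavity of the logarithm (Jensen's inequality applied to the average over $j$) gives $\tfrac1M\sum_j \log(1+\sigma^2\lambda_j) \leq \log\!\big(1 + \tfrac{\sigma^2}{M}\sum_j \lambda_j\big) = \log\!\big(1 + \tfrac{\sigma^2}{M}\tr\mathbf{J(X)}\big)$, which is exactly the claimed inequality.

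I expect the main obstacle to be the analytic bookkeeping rather than the two inequalities: one must verify that $\mathbf{X}_t$ remains in $\mathcal{F}$, that $t \mapsto h(\mathbf{X}_t)$ is differentiable with the stated derivative (justifying differentiation under the expectation in de Bruijn's identity and controlling the behaviour as $t \downarrow 0$), and that the Fisher matrices entering Stam's inequality are finite and nonsingular for every $t > 0$. The assumptions that $\mathbf{X}$ has a finite logarithmic moment and finite $\mathbf{J(X)}$ are precisely what make these steps valid — smoothing by a Gaussian at any $t>0$ immediately gives a smooth density with finite, nonsingular Fisher matrix. A secondary point is the degenerate case in which $\mathbf{J(X)}$ is singular; this is handled by a limiting argument, replacing $\mathbf{X}$ by $\mathbf{X} + \varepsilon\,\mathbf{Z}_0$, proving the bound for $\varepsilon > 0$, and letting $\varepsilon \downarrow 0$, since both sides of the final inequality are continuous under this perturbation.
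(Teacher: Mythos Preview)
Your argument is the standard de Bruijn--Stam route and is essentially correct; the only mild caveats are the regularity checks you already flag (absolute continuity of $t\mapsto h(\mathbf{X}_t)$ and finiteness/nonsingularity of $\mathbf{J}(\mathbf{X}_t)$ for $t>0$), and these do follow from the stated hypotheses on $\mathbf{X}$ together with Gaussian smoothing.

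There is no proof in the paper to compare against: Lemma~\ref{lem:diff_entr_ineq} is quoted as a known result from the literature (it is cited, not proved), and the paper only \emph{applies} it to obtain the upper bound on $h(\mathbf{X}+\mathbf{Z})$. So your proposal goes strictly beyond what the paper does. If anything, it supplies exactly the kind of self-contained derivation the paper omits, and your use of the matrix Fisher-information inequality followed by Jensen on the eigenvalues is the natural way to land on the trace form $\frac{M}{2}\log\bigl(1+\frac{\sigma^2}{M}\tr\mathbf{J(X)}\bigr)$ that the paper needs.
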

Since MGGD is a bounded pdf with a finite covariance, its logarithmic moment exists. Hence, \(\mathbf{U} \in \mathcal{F}\). From Section \ref{sec:mggd}, we know that \(h(\mathbf{U})\) is always finite and \(\mathbf{J(U)}\) has finite entries iff \(\alpha > \frac{1}{2} - \frac{M}{4}\). Under this mild constraint, we can apply Lemma \ref{lem:diff_entr_ineq} to the entropy terms in VIF.

\subsection{Bounds on VIF}
In \eqref{eq:vif_expr}, the mutual information term corresponding to the reference image is
\begin{equation}
    \label{eq:ref_mi}
    I(\mathbf{C_i};\mathbf{E_i} | z_i) = h(\mathbf{C_i} + \mathbf{N_i} | z_i) - h(\mathbf{N_i} | z_i).
\end{equation}

Since \(\mathbf{C_i} = Z_i U\), then \(\mathbf{C_i} |_{Z_i=z_i} \sim \mathcal{MGGD}(0, z_i^2 \mathbf{C_U}, \alpha)\). Using this relationship and the expression for the differential entropy of a Gaussian RV, we find the following lower and upper bounds on \eqref{eq:ref_mi}, respectively:
\begin{equation}
    \label{eq:ref_mi_ineq}
    I_l(\mathbf{C_i};\mathbf{E_i} | z_i) \leq I(\mathbf{C_i};\mathbf{E_i} | z_i) \leq I_u(\mathbf{C_i};\mathbf{E_i} | z_i)
\end{equation}
where
\begin{equation}
    I_l(\mathbf{C_i};\mathbf{E_i} | z_i) = \frac{M}{2}\log\left(1 + \frac{z_i^2}{2\pi e \sigma_n^2}\exp\left(\frac{2h(\mathbf{U})}{M}\right)\right),
\end{equation}
and
\begin{equation}
    I_u(\mathbf{C_i};\mathbf{E_i} | z_i) = h(\mathbf{U}) + \frac{M}{2}\log\left(\frac{z_i^2}{2 \pi e\sigma_n^2} + \frac{\tr \mathbf{J(U)}}{2 \pi e M}\right).
\end{equation}

Similarly, the mutual information term in \ref{eq:vif_expr} corresponding to the distorted image is
\begin{equation}
    \label{eq:dist_mi}
    I(\mathbf{C_i};\mathbf{F_i} | z_i) = \sum\limits_{i = 1}^{N} h(g_i \mathbf{C_i} + \mathbf{V_i} + \mathbf{N'_i} | z_i) - h(\mathbf{V_i} + \mathbf{N'_i} | z_i).
\end{equation}

As above, \(g_i\mathbf{C_i} |_{Z_i=z_i} \sim \mathcal{MGGD}(0, g_i^2z_i^2 \mathbf{C_U}, \alpha)\). Using this relationship and the expression for the differential entropy of a Gaussian RV, we derive the following lower and upper bounds on \eqref{eq:dist_mi}, respectively:
\begin{equation}
    \label{eq:dist_mi_ineq}
    I_l(\mathbf{C_i};\mathbf{F_i} | z_i) \leq I(\mathbf{C_i};\mathbf{F_i} | z_i) \leq I_u(\mathbf{C_i};\mathbf{F_i} | z_i)
\end{equation}
where
\begin{equation}
\begin{split}
    I_l(\mathbf{C_i};\mathbf{F_i} | z_i) = \frac{M}{2}&\log\left(1 + \frac{(g_iz_i)^2}{2\pi e\left(\sigma_n^2 + \sigma_v^2\right)}\exp\left(\frac{2h(\mathbf{U})}{M}\right)\right),
\end{split}
\end{equation}
and
\begin{equation}
    I_u(\mathbf{C_i};\mathbf{F_i} | z_i) = h(\mathbf{U}) + \frac{M}{2}\log\left(\frac{(g_iz_i)^2}{2 \pi e(\sigma_v^2 + \sigma_n^2)} + \frac{\tr \mathbf{J(U)}}{2 \pi e M}\right).
\end{equation}

Repeating this analysis for all \(K\) subbands, we obtain the following bound on VIF under the GGSM model.
\begin{equation}
\label{eq:vif_bound}
\begin{split}
    \frac{\sum\limits_{k = 1}^{K}\sum\limits_{i = 1}^{N}I_l(\mathbf{C_i^k};\mathbf{F_i^k} | z_i^k)}{\sum\limits_{k = 1}^{K}\sum\limits_{i = 1}^{N}I_u(\mathbf{C_i^k};\mathbf{E_i^k} | z_i^k)} &\leq \text{VIF} \leq \frac{\sum\limits_{k = 1}^{K}\sum\limits_{i = 1}^{N}I_u(\mathbf{C_i^k};\mathbf{F_i^k} | z_i^k)}{\sum\limits_{k = 1}^{K}\sum\limits_{i = 1}^{N}I_l(\mathbf{C_i^k};\mathbf{E_i^k} | z_i^k)}.
\end{split}
\end{equation}

\section{Approximating VIF In Practical Applications}
When applying this model to practical applications, like quality assessment of UGC pictures, finding bounds on the values of VIF supplies insufficient accuracy. However, obtaining an expression for VIF assuming a more general framework is difficult, since the distribution of the sum of MGGD and Gaussian RVs is not easily characterized, even in the scalar case \cite{ref:ggd_sum_neg} \cite{ref:ggd_sum}.

To overcome this limitation, \cite{ref:ggd_sum_neg} and \cite{ref:ggd_sum} approximated the distribution of the sum of independent scalar GGD RVs by a GGD RV, using a moment-matching method. Specifically, if \(U, V\) are independent scalar GGD RVs, and \(W = U + V\), then \(W\) is approximated as a GGD RV \(\tilde{W}\), such that \(W\) and \(\tilde{W}\) share the same means, variances and kurtoses. In the multivariate case, MGGDs are modelled using the mean, covariance matrix and Mardia's multivariate kurtosis \cite{ref:mardia_kurtosis} \cite{ref:mggd}. 
\begin{definition}[Mardia's Kurtosis]
Given an \(M\)-dimensional RV \(\mathbf{X}\) having mean \(\boldsymbol{\mu}\) and covariance \(\boldsymbol{\Sigma}\), Mardia's Kurtosis of \(X\) is given by
    \[\gamma_2(\mathbf{X}) = E[\left((\mathbf{X} - \boldsymbol{\mu})^T\boldsymbol{\Sigma}^{-1}(\mathbf{X} - \boldsymbol{\mu})\right)^2] - M(M+2).\]
\end{definition}

The work in \cite{ref:ggd_sum_neg} and \cite{ref:ggd_sum} is restricted to the scalar case because an expression for Mardia's kurtosis of sums of RVs is hitherto unknown. We bridge this gap by deriving the following results. In the following lemma, let \(\mathbb{S}^M_{++}\) denote the set of \(M\times M\) symmetric positive definite matrices.
\begin{lemma}
\label{lem:kurt_sum}
Let \(\mathbf{X}\) and \(\mathbf{Y}\) be independent \(M\)-dimensional RVs having zero means, without loss of generality, and covariance matrices \(\boldsymbol{\Sigma}_\mathbf{X}, \boldsymbol{\Sigma}_\mathbf{Y}\) respectively. Let \(\mathbf{Z} = \mathbf{X} + \mathbf{Y}\), having covariance \(\boldsymbol{\Sigma_Z} = \boldsymbol{\Sigma_X} + \boldsymbol{\Sigma_Y}\). Further, for any RVs \(\mathbf{A}, \mathbf{B}\) having covariances \(\boldsymbol{\Sigma_A}, \boldsymbol{\Sigma_B} \in \mathbb{S}^M_{++}\), let \(\boldsymbol{\Delta_{AB}} = \boldsymbol{\Sigma_A}^{\frac{1}{2}}\boldsymbol{\Sigma_B}^{-1}\boldsymbol{\Sigma_A}^{\frac{1}{2}}\), and
\(\rho_{\mathbf{A}\mathbf{B}} = 2\norm{\boldsymbol{\Delta_{AB}}}_F^2 + \tr\left(\mathbf{\Delta_{AB}}\right)^2.\)
Then, Mardia's Kurtosis of \(\mathbf{Z}\) sastisfies the relationship
\begin{equation}
\label{eq:kurt_sum}
\begin{split}
\gamma_2(\mathbf{Z}) = &E\left[\left(\mathbf{X}^T\boldsymbol{\Sigma_Z}^{-1}\mathbf{X}\right)^2\right] - \rho_{\mathbf{X}, \mathbf{Z}} + \\ &E\left[\left(\mathbf{Y}^T\boldsymbol{\Sigma_Z}^{-1}\mathbf{Y}\right)^2\right] - \rho_{\mathbf{Y}, \mathbf{Z}}.
\end{split}
\end{equation}
\end{lemma}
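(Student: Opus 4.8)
The plan is to expand the square in the defining expectation $E[(\mathbf{Z}^T\boldsymbol{\Sigma_Z}^{-1}\mathbf{Z})^2]$ and use independence together with the zero-mean hypothesis to discard all cross terms that carry an odd power of $\mathbf{X}$ or of $\mathbf{Y}$. Write $a = \mathbf{X}^T\boldsymbol{\Sigma_Z}^{-1}\mathbf{X}$, $b = \mathbf{X}^T\boldsymbol{\Sigma_Z}^{-1}\mathbf{Y}$, and $c = \mathbf{Y}^T\boldsymbol{\Sigma_Z}^{-1}\mathbf{Y}$, so that $\mathbf{Z}^T\boldsymbol{\Sigma_Z}^{-1}\mathbf{Z} = a + 2b + c$ and
\[ E\!\left[\left(\mathbf{Z}^T\boldsymbol{\Sigma_Z}^{-1}\mathbf{Z}\right)^2\right] = E[a^2] + E[c^2] + 4E[b^2] + 2E[ac] + 4E[ab] + 4E[bc]. \]
First I would note that $E[ab] = E[bc] = 0$: in each of these products the factor $b$ is linear in exactly the one variable ($\mathbf{Y}$ for $ab$, $\mathbf{X}$ for $bc$) that appears nowhere else, so conditioning on the other variable and using $E[\mathbf{X}] = E[\mathbf{Y}] = 0$ annihilates the expectation. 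The terms $E[a^2]$ and $E[c^2]$ are precisely the first expectations appearing in each line of \eqref{eq:kurt_sum}, so it remains only to establish $4E[b^2] + 2E[ac] - M(M+2) = -\rho_{\mathbf{X},\mathbf{Z}} - \rho_{\mathbf{Y},\mathbf{Z}}$.

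Next I would evaluate the two surviving mixed expectations by iterated expectation. Conditioning on $\mathbf{X}$ and using $E[\mathbf{Y}\mathbf{Y}^T] = \boldsymbol{\Sigma_Y}$ gives $E[b^2] = \tr(\boldsymbol{\Sigma_Z}^{-1}\boldsymbol{\Sigma_Y}\boldsymbol{\Sigma_Z}^{-1}\boldsymbol{\Sigma_X})$, while independence gives $E[ac] = E[a]E[c] = \tr(\boldsymbol{\Sigma_Z}^{-1}\boldsymbol{\Sigma_X})\,\tr(\boldsymbol{\Sigma_Z}^{-1}\boldsymbol{\Sigma_Y})$. To compare these against the $\rho$ terms I would pass to the coordinates that whiten $\mathbf{Z}$: set $\mathbf{Q} = \boldsymbol{\Sigma_Z}^{-1/2}\boldsymbol{\Sigma_X}\boldsymbol{\Sigma_Z}^{-1/2}$ and $\mathbf{R} = \boldsymbol{\Sigma_Z}^{-1/2}\boldsymbol{\Sigma_Y}\boldsymbol{\Sigma_Z}^{-1/2}$, both symmetric positive semidefinite, so that the hypothesis $\boldsymbol{\Sigma_Z} = \boldsymbol{\Sigma_X} + \boldsymbol{\Sigma_Y}$ becomes the single clean identity $\mathbf{Q} + \mathbf{R} = \mathbf{I}$. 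Using cyclicity of the trace and symmetry of the square roots, one checks $\tr(\boldsymbol{\Delta_{XZ}}) = \tr\mathbf{Q}$ and $\norm{\boldsymbol{\Delta_{XZ}}}_F^2 = \tr(\boldsymbol{\Delta_{XZ}}^2) = \tr(\mathbf{Q}^2)$ (and the same with $\mathbf{Y}$, $\mathbf{R}$), while $E[b^2] = \tr(\mathbf{RQ})$ and $E[ac] = \tr(\mathbf{Q})\tr(\mathbf{R})$.

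Finally I would assemble the scalar identity. By definition $\rho_{\mathbf{X},\mathbf{Z}} + \rho_{\mathbf{Y},\mathbf{Z}} = 2\tr(\mathbf{Q}^2) + 2\tr(\mathbf{R}^2) + (\tr\mathbf{Q})^2 + (\tr\mathbf{R})^2$, so the remaining claim is equivalent to $2\tr(\mathbf{Q}^2) + 4\tr(\mathbf{QR}) + 2\tr(\mathbf{R}^2) + (\tr\mathbf{Q})^2 + 2\tr\mathbf{Q}\tr\mathbf{R} + (\tr\mathbf{R})^2 = M(M+2)$; but the left side equals $2\tr\big((\mathbf{Q}+\mathbf{R})^2\big) + \big(\tr(\mathbf{Q}+\mathbf{R})\big)^2 = 2\tr(\mathbf{I}) + (\tr\mathbf{I})^2 = 2M + M^2$, which is exactly $M(M+2)$. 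Substituting back recovers \eqref{eq:kurt_sum}. I expect the only real hazard to be bookkeeping — correctly enumerating the six terms of the expansion and classifying each as vanishing or surviving; once the whitening substitution is in place and $\mathbf{Q}+\mathbf{R}=\mathbf{I}$ is exploited, the trace algebra collapses at once, so no deeper obstacle is anticipated.
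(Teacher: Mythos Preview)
Your argument is correct. The expansion of $(a+2b+c)^2$, the vanishing of $E[ab]$ and $E[bc]$ by conditioning on the variable that appears only linearly, the evaluation of $E[b^2]$ and $E[ac]$, and the whitening reduction to $\mathbf{Q}+\mathbf{R}=\mathbf{I}$ all check out; the final trace identity $2\tr((\mathbf{Q}+\mathbf{R})^2)+(\tr(\mathbf{Q}+\mathbf{R}))^2=M(M+2)$ closes the computation cleanly.

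The paper omits its proof but indicates a different route, via ``multivariate moments and cumulants'' and ``evaluating expectations of Gaussian RVs.'' The intended argument is presumably this: for a Gaussian vector $\mathbf{G}\sim\mathcal{N}(0,\boldsymbol{\Sigma_X})$ the standard quadratic-form identity gives $E[(\mathbf{G}^T\boldsymbol{\Sigma_Z}^{-1}\mathbf{G})^2]=2\|\boldsymbol{\Delta_{XZ}}\|_F^2+(\tr\boldsymbol{\Delta_{XZ}})^2=\rho_{\mathbf{X},\mathbf{Z}}$, so each bracket $E[(\mathbf{X}^T\boldsymbol{\Sigma_Z}^{-1}\mathbf{X})^2]-\rho_{\mathbf{X},\mathbf{Z}}$ is the excess of the fourth moment over its Gaussian value, i.e.\ a double contraction of the fourth \emph{cumulant} tensor of $\mathbf{X}$ against $\boldsymbol{\Sigma_Z}^{-1}$. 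Since cumulants of independent summands add, and since $\gamma_2(\mathbf{Z})$ is itself that same contraction of the fourth cumulant of $\mathbf{Z}$, the lemma follows. That approach explains structurally why $\rho_{\mathbf{A}\mathbf{B}}$ has the specific form $2\|\boldsymbol{\Delta_{AB}}\|_F^2+(\tr\boldsymbol{\Delta_{AB}})^2$. Your approach, by contrast, is entirely elementary: it never invokes cumulants or the Gaussian quadratic-form moment, and instead closes the books with the single algebraic observation $\mathbf{Q}+\mathbf{R}=\mathbf{I}$. You trade conceptual transparency about the origin of $\rho$ for a fully self-contained computation.
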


The proof of Lemma \ref{lem:kurt_sum} follows from the definitions and properties of multivariate moments and cumulants defined in \cite{ref:mult_kurt}, and evaluating expectations of Gaussian RVs. We omit the details of this proof for brevity.

\begin{definition}
A RV \(\mathbf{X}\) is said to Elliptically Distributed with mean \(\boldsymbol{\mu}\), covariance \(\boldsymbol{\Sigma}\) and a ``generating function" \(g\), i.e., \(\mathbf{X} \sim \mathcal{ED}(\boldsymbol{\mu}, \boldsymbol{\Sigma}, g)\) if the pdf of \(\mathbf{X}\) is of the form
\[f_\mathbf{X}(\mathbf{x}) = C \det(\Sigma)^{\frac{1}{2}} g((\mathbf{x} - \boldsymbol{\mu})^T \boldsymbol{\Sigma}^{-1}(\mathbf{x} - \boldsymbol{\mu})).\]
\end{definition}
\begin{theorem}
\label{th:kurt_sum_ed}
Let \(\mathbf{X} \sim \mathcal{ED}\left(0, \boldsymbol{\Sigma_X}, g_\mathbf{X}\right)\), \(\mathbf{Y} \sim \mathcal{ED}\left(0, \boldsymbol{\Sigma_Y}, g_\mathbf{Y}\right)\) be independent \(M\)-dimensional RVs, and \(\mathbf{Z}=\mathbf{X} + \mathbf{Y}\). For a generating function \(g\), let \(\lambda_{g}^{(M)} = E[V_1^4]/E[V_1^2V_2^2] - 3\), where \(V_i\) are components of the \(M\)-dimensional RV \(\mathbf{V} \sim \mathcal{ED}(0, I, g)\), and let \(^{\circ 2}\) denote elementwise squaring. Then,
\begin{align}
\label{eq:kurt_sum_ed}
    \gamma_2(\mathbf{Z}) = &\frac{\left(\rho_{\mathbf{X}\mathbf{Z}} + \lambda_{g_{\mathbf{X}}}^{(M)}\tr\left(\boldsymbol{\Delta}_{\mathbf{X}\mathbf{Z}}^{\circ 2}\right)\right) \left(\gamma_2\left(\mathbf{X}\right) + M\left(M+2\right)\right)}{M\left(M+ 2 +\lambda_{g_{\mathbf{X}}}^{(M)}\right)} + \nonumber \\
    &\frac{\left(\rho_{\mathbf{Y}\mathbf{Z}} + \lambda_{g_{\mathbf{Y}}}^{(M)}\tr\left(\boldsymbol{\Delta}_{\mathbf{Y}\mathbf{Z}}^{\circ 2}\right)\right) \left(\gamma_2\left(\mathbf{Y}\right) + M\left(M+2\right)\right)}{M\left(M+ 2 +\lambda_{g_{\mathbf{Y}}}^{(M)}\right)} - \nonumber \\
    &\rho_{\mathbf{X}\mathbf{Z}} - \rho_{\mathbf{Y}\mathbf{Z}}.
\end{align}
\end{theorem}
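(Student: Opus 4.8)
The plan is to reduce the statement, via Lemma~\ref{lem:kurt_sum}, to an explicit evaluation of the two quadratic-form moments \(E[(\mathbf{X}^T\boldsymbol{\Sigma_Z}^{-1}\mathbf{X})^2]\) and \(E[(\mathbf{Y}^T\boldsymbol{\Sigma_Z}^{-1}\mathbf{Y})^2]\) appearing on its right-hand side. Since \(\boldsymbol{\Sigma_Z}=\boldsymbol{\Sigma_X}+\boldsymbol{\Sigma_Y}\in\mathbb{S}^M_{++}\), all the square roots and inverses used below are well defined, and by symmetry it suffices to treat the \(\mathbf{X}\) term.

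First I would use the whitening representation of an elliptical law: write \(\mathbf{X}=\boldsymbol{\Sigma_X}^{\frac12}\mathbf{V}\) with \(\mathbf{V}\sim\mathcal{ED}(0,\mathbf{I},g_\mathbf{X})\), which is spherically symmetric, so that \(\mathbf{X}^T\boldsymbol{\Sigma_Z}^{-1}\mathbf{X}=\mathbf{V}^T\boldsymbol{\Delta}_{\mathbf{X}\mathbf{Z}}\mathbf{V}\) with \(\boldsymbol{\Delta}_{\mathbf{X}\mathbf{Z}}\) symmetric. Expanding \((\mathbf{V}^T\boldsymbol{\Delta}_{\mathbf{X}\mathbf{Z}}\mathbf{V})^2=\sum_{i,j,k,l}[\boldsymbol{\Delta}_{\mathbf{X}\mathbf{Z}}]_{ij}[\boldsymbol{\Delta}_{\mathbf{X}\mathbf{Z}}]_{kl}V_iV_jV_kV_l\) and taking expectations, the sign-flip symmetry of \(\mathbf{V}\) annihilates every term except those in which the four indices pair up, and permutation symmetry forces \(E[V_i^4]=a\) and \(E[V_i^2V_j^2]=b\) (for \(i\neq j\)) to be index-independent. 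Collecting the three pairing patterns with the identities \(\sum_i[\boldsymbol{\Delta}_{\mathbf{X}\mathbf{Z}}]_{ii}=\tr\boldsymbol{\Delta}_{\mathbf{X}\mathbf{Z}}\), \(\sum_{i,j}[\boldsymbol{\Delta}_{\mathbf{X}\mathbf{Z}}]_{ij}^2=\norm{\boldsymbol{\Delta}_{\mathbf{X}\mathbf{Z}}}_F^2\), and \(\sum_i[\boldsymbol{\Delta}_{\mathbf{X}\mathbf{Z}}]_{ii}^2=\tr(\boldsymbol{\Delta}_{\mathbf{X}\mathbf{Z}}^{\circ2})\) collapses the sum to
\[E\big[(\mathbf{X}^T\boldsymbol{\Sigma_Z}^{-1}\mathbf{X})^2\big]=b\,\rho_{\mathbf{X}\mathbf{Z}}+(a-3b)\,\tr\big(\boldsymbol{\Delta}_{\mathbf{X}\mathbf{Z}}^{\circ2}\big).\]

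Next I would eliminate \(a\) and \(b\). By definition \(\lambda_{g_\mathbf{X}}^{(M)}=a/b-3\), hence \(a-3b=\lambda_{g_\mathbf{X}}^{(M)}b\). To pin down \(b\), I use affine invariance of Mardia's kurtosis, \(\gamma_2(\mathbf{X})=\gamma_2(\mathbf{V})=E[\norm{\mathbf{V}}_2^4]-M(M+2)=Ma+M(M-1)b-M(M+2)\); substituting \(a=(3+\lambda_{g_\mathbf{X}}^{(M)})b\) and solving yields \(b=(\gamma_2(\mathbf{X})+M(M+2))/(M(M+2+\lambda_{g_\mathbf{X}}^{(M)}))\). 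Plugging this back reproduces the first summand of \eqref{eq:kurt_sum_ed} verbatim, and the same computation with \(\mathbf{X}\) replaced by \(\mathbf{Y}\) gives the second. Substituting both into Lemma~\ref{lem:kurt_sum} leaves exactly the residual \(-\rho_{\mathbf{X}\mathbf{Z}}-\rho_{\mathbf{Y}\mathbf{Z}}\), which completes the proof.

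The hard part will be the bookkeeping in the middle step: one must first confirm that, unlike the Gaussian case where Isserlis' theorem applies, the fourth moments of a general spherically symmetric \(\mathbf{V}\) are controlled by exactly the two constants \(a\) and \(b\), and then enumerate the index-pairing patterns so that the diagonal-only contribution \(\tr(\boldsymbol{\Delta}_{\mathbf{X}\mathbf{Z}}^{\circ2})\) is cleanly separated from the Frobenius and trace invariants that make up \(\rho_{\mathbf{X}\mathbf{Z}}\). A minor technical point is that \(\lambda_g^{(M)}\) is well defined only when \(b=E[V_1^2V_2^2]>0\), which holds for any non-degenerate \(\mathbf{V}\).
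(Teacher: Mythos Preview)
Your proposal is correct and follows essentially the same route as the paper: both reduce to Lemma~\ref{lem:kurt_sum}, whiten \(\mathbf{X}\) to a spherical \(\mathcal{ED}(0,\mathbf{I},g_\mathbf{X})\) vector, use sign-flip and permutation symmetry to reduce the fourth-moment expansion of \((\mathbf{V}^T\boldsymbol{\Delta}_{\mathbf{X}\mathbf{Z}}\mathbf{V})^2\) to the two constants \(a=E[V_1^4]\) and \(b=E[V_1^2V_2^2]\), and then eliminate \(a,b\) via the identities \(\lambda_{g}^{(M)}=a/b-3\) and \(\gamma_2(\mathbf{V})=Ma+M(M-1)b-M(M+2)\) together with affine invariance of Mardia's kurtosis. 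Your write-up is in fact more explicit than the paper's sketch---your closed form \(b\,\rho_{\mathbf{X}\mathbf{Z}}+(a-3b)\,\tr(\boldsymbol{\Delta}_{\mathbf{X}\mathbf{Z}}^{\circ2})\) is exactly what the paper's equation~\eqref{eq:e_xz} collapses to once the \(i\neq j\) sums are rewritten in terms of \(\norm{\boldsymbol{\Delta}_{\mathbf{X}\mathbf{Z}}}_F^2\), \((\tr\boldsymbol{\Delta}_{\mathbf{X}\mathbf{Z}})^2\), and \(\tr(\boldsymbol{\Delta}_{\mathbf{X}\mathbf{Z}}^{\circ2})\).
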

\begin{proof}
Let \(\mathbf{V} \sim \mathcal{ED}(0, I, g)\) be an \(M\)-dimensional RV. Observing that the distribution of \(\mathbf{V}\) is invariant to permutations of its components, it follows that \(E[Y_i^4] = E[Y_1^4]\), and \(E[Y_i^2Y_j^2] = E[Y_1^2Y_2^2]\) for all \(i\neq j\). Using this symmetry and the definition of \(\lambda_{g_\mathbf{X}}^{(M)}\), it follows that
\begin{align}
    \label{eq:spherical_dist_kurt}
    \gamma_2(\mathbf{V})&= E[\norm{\mathbf{V}}^4] - M(M+2) \nonumber\\
    &= ME[V_1^4] + M(M-1)E[V_1^2V_2^2] - M(M+2) \nonumber \\
    &= ME[V_1^2V_2^2](\lambda_{g}^{(M)} + M + 2) - M(M+2)
\end{align}

Consider the term \(E\left[\left(\mathbf{X}^T \boldsymbol{\Sigma}_\mathbf{Z}^{-1} \mathbf{X}\right)^2\right]\). Using the transformation \(\mathbf{\tilde{X}} = \boldsymbol{\Sigma_X}^{-\frac{1}{2}}\mathbf{X}\) and the definition of \(\boldsymbol{\Delta_{AB}}\), it follows that
\begin{equation}
    \label{eq:power_4_setup}
    E\left[\left(\mathbf{X}^T \boldsymbol{\Sigma}_\mathbf{Z}^{-1} \mathbf{X}\right)^2\right] = E\left[\left(\mathbf{\tilde{X}}^T\boldsymbol{\Delta_{XZ}}\mathbf{\tilde{X}}^T\right)^2\right].
\end{equation}
Note that \(\mathbf{\tilde{X}} \sim \mathcal{ED}(0, I, g_\mathbf{X})\), and since kurtosis is invariant to linear transformations, \(\gamma_2(\mathbf{\tilde{X}}) = \gamma_2(\mathbf{X})\). Expanding the polynomial in \eqref{eq:power_4_setup}, we observe that the expectation of terms having odd powers is 0. Further, by the same symmetry argument used in \eqref{eq:spherical_dist_kurt}, it follows that
\begin{align}
    \label{eq:e_xz}
    E\left[\left(\mathbf{X}^T \boldsymbol{\Sigma}_\mathbf{Z}^{-1} \mathbf{X}\right)^2\right] = &E[\tilde{X}_1^2\tilde{X}_2^2]\sum_{i\neq j} \left(2\left(\Delta_\mathbf{XZ}^{ij}\right)^2 + \Delta_\mathbf{XZ}^{ii}\Delta_{\mathbf{XZ}}^{jj}\right) \nonumber\\ &+E[\tilde{X}_1^4]\sum_{i}\left(\Delta_{\mathbf{XZ}}^{ii}\right)^2.
\end{align}
Using \eqref{eq:spherical_dist_kurt}, it follows that \eqref{eq:e_xz} evaluates to the first term in \eqref{eq:kurt_sum_ed}. Repeating the analysis for \(\mathbf{Y}\) completes the proof.
\end{proof}

For an MGGD, which is an elliptical distribution, it can be shown that \(\lambda_{g}^{(M)} = 0\). In addition, it can be shown that Mardia's Kurtosis of Gaussian RVs is zero. Using these properties, we obtain the following corollary of Theorem \ref{th:kurt_sum_ed}.
\begin{corollary}
\label{cor:kurt_sum_mggd_gauss}
Let \(\mathbf{X}\) be an MGGD RV having zero mean, shape parameter \(\alpha\) and covariance matrix \(\boldsymbol{\Sigma}\), \(\mathbf{Y} \sim \mathcal{N}(0, \sigma^2\mathbf{I})\) independent of \(\mathbf{X}\) and \(\mathbf{Z} = \mathbf{X} + \mathbf{Y}\). Then,
\[\gamma_2(\mathbf{Z}) = \gamma_2(\mathbf{X}) \frac{\rho_{\mathbf{XZ}}}{M(M+2)}.\]
\end{corollary}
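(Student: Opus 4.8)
The plan is to obtain Corollary \ref{cor:kurt_sum_mggd_gauss} as a direct specialization of Theorem \ref{th:kurt_sum_ed}, feeding in three facts: \(\lambda_{g_{\mathbf{X}}}^{(M)} = 0\) for an MGGD, \(\lambda_{g_{\mathbf{Y}}}^{(M)} = 0\) for the Gaussian \(\mathbf{Y}\) (which is itself elliptical), and \(\gamma_2(\mathbf{Y}) = 0\). First I would record why these hold. Writing \(\mathbf{V}\sim\mathcal{ED}(0,\mathbf{I},g)\) in the stochastic form \(\mathbf{V} = R\mathbf{S}\), with \(\mathbf{S}\) uniform on the unit sphere of \(\mathbb{R}^M\) and \(R\ge 0\) independent of \(\mathbf{S}\), gives \(E[V_1^4]/E[V_1^2 V_2^2] = E[S_1^4]/E[S_1^2 S_2^2]\); since \((S_1^2,\dots,S_M^2)\sim\mathrm{Dirichlet}(\tfrac12,\dots,\tfrac12)\), this ratio equals \(3\) irrespective of \(g\), so \(\lambda_g^{(M)} = 0\) for both the MGGD (any shape parameter \(\alpha\)) and the Gaussian (for which one may alternatively use \(E[V_1^4]=3\), \(E[V_1^2V_2^2]=1\) directly). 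For \(\gamma_2(\mathbf{Y})\), since \(\mathbf{Y}\sim\mathcal{N}(0,\sigma^2\mathbf{I})\) has \(\mathbf{Y}^T\boldsymbol{\Sigma_Y}^{-1}\mathbf{Y}\sim\chi^2_M\), we get \(E[(\mathbf{Y}^T\boldsymbol{\Sigma_Y}^{-1}\mathbf{Y})^2]=M(M+2)\), hence \(\gamma_2(\mathbf{Y})=0\).

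Next I would substitute into \eqref{eq:kurt_sum_ed}. Setting \(\lambda_{g_{\mathbf{X}}}^{(M)}=0\) annihilates the \(\tr(\boldsymbol{\Delta}_{\mathbf{X}\mathbf{Z}}^{\circ 2})\) contribution and turns the first summand into \(\rho_{\mathbf{X}\mathbf{Z}}\bigl(\gamma_2(\mathbf{X})+M(M+2)\bigr)/\bigl(M(M+2)\bigr)\). Setting \(\lambda_{g_{\mathbf{Y}}}^{(M)}=0\) and \(\gamma_2(\mathbf{Y})=0\) collapses the second summand to \(\rho_{\mathbf{Y}\mathbf{Z}}\,M(M+2)/\bigl(M(M+2)\bigr)=\rho_{\mathbf{Y}\mathbf{Z}}\). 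The theorem then yields
\[ \gamma_2(\mathbf{Z}) = \frac{\rho_{\mathbf{X}\mathbf{Z}}\gamma_2(\mathbf{X})}{M(M+2)} + \rho_{\mathbf{X}\mathbf{Z}} + \rho_{\mathbf{Y}\mathbf{Z}} - \rho_{\mathbf{X}\mathbf{Z}} - \rho_{\mathbf{Y}\mathbf{Z}} = \frac{\rho_{\mathbf{X}\mathbf{Z}}}{M(M+2)}\,\gamma_2(\mathbf{X}), \]
which is the claim. I would also note that every quantity here is finite: an MGGD with \(\alpha>0\) has moments of all orders, so \(\gamma_2(\mathbf{X})\), \(\boldsymbol{\Delta}_{\mathbf{X}\mathbf{Z}}\) and \(\rho_{\mathbf{X}\mathbf{Z}}\) are well defined, with \(\boldsymbol{\Sigma_Z}=\boldsymbol{\Sigma}+\sigma^2\mathbf{I}\in\mathbb{S}^M_{++}\).

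There is essentially no serious obstacle once Theorem \ref{th:kurt_sum_ed} is in hand; the corollary is a bookkeeping exercise in cancellation. The one point that deserves a line of justification is the claim \(\lambda_g^{(M)}=0\) for the MGGD — equivalently, that the spherical marginals of a unit-covariance elliptical vector have the same fourth-moment ratio as the Gaussian — which, as sketched above, follows from the Dirichlet law of \((S_1^2,\dots,S_M^2)\) and is independent of \(\alpha\). Given that, the vanishing of both elementwise-square traces and the clean cancellation of the \(\rho_{\mathbf{Y}\mathbf{Z}}\) terms produce the stated compact form.
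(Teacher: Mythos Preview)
Your proposal is correct and follows exactly the route the paper takes: derive the corollary from Theorem~\ref{th:kurt_sum_ed} by inserting \(\lambda_{g_{\mathbf{X}}}^{(M)}=0\) for the MGGD and \(\gamma_2(\mathbf{Y})=0\) for the Gaussian, then cancelling the \(\rho\) terms. If anything, you supply more justification than the paper does---your \(R\mathbf{S}\)/Dirichlet argument in fact shows \(\lambda_g^{(M)}=0\) for \emph{every} spherical law with fourth moments, which of course subsumes both the MGGD and Gaussian cases.
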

Using Corollary \ref{cor:kurt_sum_mggd_gauss}, we can utilize moment-matching to approximate the distributions of \(\mathbf{E_i}\) and \(\mathbf{F_i}\) by MGGD RVs. As a result, we are able to approximate the mutual information terms in \eqref{eq:ref_mi} and \eqref{eq:dist_mi}, using the expression for differential entropy of MGGD from Lemma \ref{lem:mggd_entr}.

To illustrate this, consider \(I(\mathbf{C_i};\mathbf{E_i} | z_i)\) in \eqref{eq:ref_mi}. Note that \(h(\mathbf{N_i})\) is known, since \(\mathbf{N_i}\) is a Gaussian RV. Let \(\boldsymbol{\Sigma_U}\) be the covariance matrix of \(\mathbf{U}\). Then, given \(Z_i = z_i\), the covariance of \(\mathbf{C_i}\) is \(z_i^2 \boldsymbol{\Sigma_U}\). Using Corollary \ref{cor:kurt_sum_mggd_gauss}, the distribution of \(\mathbf{E_i} = \mathbf{C_i} + \mathbf{N_i}\) is approximated as \(\mathbf{\tilde{E_i}} \sim \mathcal{MGGD}(0, \mathbf{C_{\tilde{E_i}}}, \beta_i)\) by moment-matching, such that
\begin{equation}
    \text{Cov}(\mathbf{\tilde{E_i}}) = \text{Cov}(\mathbf{E_i}) = z_i^2\boldsymbol{\Sigma_U} + \sigma_n^2 \mathbf{I},
\end{equation}
\begin{equation}
    \gamma_2(\mathbf{\tilde{E_i}}) = \gamma_2(\mathbf{E_i}) = \gamma_2(\mathbf{U}) \frac{\rho_{\mathbf{C_i}\mathbf{E_i}}}{M(M+2)}.
\end{equation}

Knowing the covariance and kurtosis, the parameters \(\mathbf{C_{\tilde{E_i}}}\) and \(\beta_i\) may be estimated, following the standard procedure for MGGDs \cite{ref:mggd}. Using these estimates in \eqref{eq:ggd_entr}, we approximate \(h(\mathbf{C_i} + \mathbf{N_i}) \approx h(\tilde{\mathbf{E_i}})\),

using which \(I(\mathbf{C_i};\mathbf{E_i} | z_i)\) may be approximated. A similar method may be used to approximate \(I(\mathbf{C_i};\mathbf{F_i} | z_i)\) in \eqref{eq:dist_mi}. Approximate values of mutual information obtained in this manner may be used in \eqref{eq:vif_expr} to approximate VIF.

\section{Conclusion and Discussion}
In this work, we derived novel expressions for properties of MGGD RVs, using which we analyzed the VIF model under a GGSM natural image model. In addition, we proposed a method to approximate VIF using a novel result regarding Mardia's kurtosis of the sum of independent RVs. This framework extends easily to other information-theoretic quality assessment models, such as ST-RRED and SpEED-QA. To the best of our knowledge, this work is the first theoretical extension of an IQA model to a relaxed model of NSS.

The only obstacle in the way of testing this work is the lack of a multivariate GGSM modeling algorithm. The authors of \cite{ref:SEP} proposed iterative maximum likelihood and MCMC-based Bayesian methods to estimate the parameters of a scalar GGSM, under some additional assumptions on the mixing distribution \(Z_i\). An iterative modeling algorithm was proposed in \cite{ref:ggsm}, but after one iteration of that method, the divisively normalized vectors lie on the surface of an ellipsoid. As a result, no proper continuous distribution can model these vectors, making this method unstable and inaccurate in practice. Developing a sound GGSM modeling algorithm will pave the way for quality assessment models that are resilient to deviations from ideal NSS.
\bibliographystyle{IEEEbib}
\bibliography{refs}

\end{document}